\newtheorem{theorem}{Theorem} 
\newtheorem{lemma}[theorem]{Lemma}
\theoremstyle{plain}
\newcommand*{\cE}{\mathcal{E}}
\newcommand*{\cF}{\mathcal{F}}
\newcommand*{\cG}{\mathcal{G}}
\newcommand*{\cH}{\mathcal{H}}
\newcommand*{\cK}{\mathcal{K}}
\newcommand*{\cL}{\mathcal{L}}
\newcommand*{\cN}{\mathcal{N}}
\newcommand*{\cR}{\mathcal{R}}
\newcommand*{\cS}{\mathcal{S}}
\newcommand*{\cT}{\mathcal{T}}
\newcommand*{\cV}{\mathcal{V}}
\newcommand*{\cW}{\mathcal{W}}
\newcommand*{\cEp}{\mathcal{E'}}
\newcommand*{\cFp}{\mathcal{F'}}
\newcommand{\Sym}{\mbox{Sym}}
\newcommand*{\eps}{\varepsilon}
\newcommand*{\tr}{\mathsf{tr}}
\newcommand*{\id}{\mathsf{id}}
\newcommand*{\ket}[1]{|#1\rangle}
\newcommand*{\bra}[1]{\langle #1|}
\newcommand*{\proj}[1]{\ket{#1}\bra{#1}}
\newcommand{\be}{\begin{equation}}
\newcommand{\ee}{\end{equation}}
\newcommand{\bea}{\begin{eqnarray}}
\newcommand{\eea}{\end{eqnarray}}
\newcommand{\bestar}{\begin{equation*}}
\newcommand{\eestar}{\end{equation*}}
\newcommand{\beastar}{\begin{eqnarray*}}
\newcommand{\eeastar}{\end{eqnarray*}}
\newcommand*{\End}{\mathrm{End}}
\newcommand*{\dia}{\diamond}
\begin{document}

\title{Post-selection technique for quantum channels with applications
  to quantum cryptography}

\author{Matthias \surname{Christandl}}
\affiliation{Faculty of Physics, Ludwig-Maximilians-University
Munich, 80333 Munich, Germany}

\author{Robert \surname{K{\"o}nig}}
\affiliation{Institute for Quantum Information, California Institute of Technology, Pasadena, CA 91125, USA}

\author{Renato \surname{Renner}}
\affiliation{Institute for Theoretical Physics, ETH Zurich, 8093 Zurich, Switzerland}

\begin{abstract}
  We propose a general method for studying properties of quantum
  channels acting on an $n$-partite system, whose action is invariant
  under permutations of the subsystems. Our main result is that, in
  order to prove that a certain property holds for any arbitrary
  input, it is sufficient to consider the special case where the input
  is a particular \emph{de Finetti-type state}, i.e., a state which
  consists of $n$ identical and independent copies of an (unknown)
  state on a single subsystem.  A similar statement holds for more
  general channels which are covariant with respect to the action of
  an arbitrary finite or locally compact group.

  Our technique can be applied to the analysis of
  information-theoretic problems. For example, in quantum
  cryptography, we get a simple proof for the fact that security of a
  discrete-variable quantum key distribution protocol against
  collective attacks implies security of the protocol against the most
  general attacks. The resulting security bounds are tighter than
  previously known bounds obtained by proofs relying on the
  exponential de Finetti theorem~\cite{Renner07}.
\end{abstract}

\pacs{03.67.-a, 02.20.Qs, 03.67.Dd}
\date{September 17, 2008}

\maketitle

\newcommand*{\stateset}{\cS}
\newcommand*{\qual}{\lambda}
\newcommand*{\paramset}{\Lambda}

\pagestyle{plain}
In quantum mechanics, the most general way of describing the evolution of a subsystem $A$ ($A$ may be part of a larger system) at time $t$  to a subsystem $A'$ at a later point in time $t'$ is by application of a \emph{quantum channel}. Mathematically, a quantum channel is a \emph{completely positive trace-preserving (CPTP) map} transforming the reduced density matrix $\rho_A$ of system $A$ at time $t$  to $\rho_{A'}$, the reduced density matrix of system $A'$ at time $t'$.
CPTP maps are used in various areas of physics
and information theory. A CPTP map modeling a particular quantum communication channel, for instance, describes how the channel output
$\rho_{A'}$ depends on the input $\rho_A$.

A common method to characterize a given CPTP map $\cE$ is to compare it
to an \emph{idealized} CPTP map $\cF$ that is well understood, e.g.,
because it has a simple description. For instance, given a physical
communication channel specified by $\cE$, one may characterize its
ability to reliably transmit messages by showing its similarity to a
perfect channel $\cF$ characterized by the identity mapping
$\id$. Another example is the analysis of information-theoretic or
cryptographic \emph{protocols} (e.g., for quantum key
distribution). Here, $\cE$ may be the action of the \emph{actual protocol} while $\cF$ is the \emph{ideal functionality} the
protocol is supposed to reproduce. We are then typically interested in
proving that $\cE$ is almost equal to $\cF$ (in quantum cryptography, this corresponds to proving security).

In order to compare two CPTP maps $\cE$ and $\cF$, we need a notion of
distance. A natural choice is the metric induced by the diamond norm
$\| \cdot \|_{\dia}~$\footnote{The diamond norm is given by $\| \cE
  \|_{\dia}=\sup_{k \in \mathbb{N}} \| \cE \otimes \id_k\|_1$ where
  $\| \cF \|_{1}:= \sup_{\|\sigma\|_1\leq 1} \|\cF(\sigma) \|_1$ and
  $\| \sigma\|_1:=\tr \sqrt{\sigma^\dagger \sigma}$ is the trace
  norm. $\id_k$ denotes the identity map on states of a
  $k$-dimensional quantum system. The suprema are reached for positive
  $\sigma$ and $k$ equal to the dimension of the input of
  $\cE$~\cite{Kitaev97}.} since it is directly related to
the maximum probability that a difference can be observed between the
processes described by $\cE$ and $\cF$, respectively. More precisely,
consider a hypothetical game where a player is asked to guess whether
a given physical process is described by $\cE$ or $\cF$, which are
both equally likely to be the correct descriptions. If the player is
allowed to observe the process once (with an input of his choice,
possibly correlated with a reference system) then the maximum
probability $p$ of a correct guess is given by $ p = \frac{1}{2} +
\frac{1}{4} \| \cE - \cF \|_{\dia} $ .  In particular, if $\cE$ and
$\cF$ are identical, the distance equals zero and, hence, $p =
\frac{1}{2}$, corresponding to a random guess. On the other hand, if
$\cE$ and $\cF$ are perfectly distinguishable, we have $\| \cE - \cF
\|_{\dia} = 2$ and $p = 1$.

Here, we present a general method for computing an upper bound on the
distance $\|\cE - \cF \|_{\dia}$ between two maps $\cE$ and $\cF$,
provided they act symmetrically on an
$n$-partite system with subsystems $\cH$ of finite
dimension.
While, by
definition, the diamond norm involves a maximization over all possible
inputs, we show that for calculating the bound it is sufficient to
consider (relative to a reference system) the particular input
\begin{align} \label{eq:deFinstate}
\tau_{\cH^n} = \int \sigma_{\cH}^{\otimes n} \mu(\sigma_{\cH}) \ ,
\end{align}
where $\mu(\cdot)$ is the measure on the space of density operators on
a single subsystem induced by the Hilbert-Schmidt metric. States of
the form~\eqref{eq:deFinstate} are also known as \emph{de Finetti
  states}. They describe the joint state of $n$ subsystems prepared as
identical and independent copies of an (unknown) density operator
$\sigma_{\cH}$. Because of their structure, de Finetti states are
usually easy to handle in calculations and proofs, as outlined below.

As an example, we apply this result to the security analysis of
quantum key distribution (QKD) schemes~\cite{BenBra84,Ekert91}. Let
$\cE$ be the map describing a given QKD protocol, which takes as input
$n$ predistributed particle pairs (which may have been generated in a
preliminary protocol step). Security of the protocol (against the most
general attacks) is then defined by the requirement that the
\emph{protocol} $\cE$ is close to the \emph{ideal functionality} $\cF$
that simply outputs a \emph{perfect key}, independently of the input
(which may be arbitrarily compromised by the action of an adversary).
Now, according to our main result, this distance is bounded by simply
evaluating the map $\cE$ for an input of the
form~\eqref{eq:deFinstate} and comparing the generated key with a
perfect key. We further show that this result is equivalent to proving
security of the scheme against a restricted type of attacks, called
\emph{collective attacks}, where the adversary is assumed to attack
each of the particle pairs independently and identically.  Our result
thus gives a simple proof for the statement (proved originally
in~\cite{Renner05, Renner07}) that security of a QKD protocol against
collective attacks implies security against the most general
attacks. The resulting security bounds are tighter than previously
known bounds obtained by proofs relying on the exponential de Finetti
theorem~\cite{Renner07}.

\emph{Main Result.}  Let $\Delta$ be a linear map from
$\End(\cH^{\otimes n})$ to $\End(\cH')$. In particular, $\Delta$ may
be the difference between two CPTP maps. $\End(\cL )$ denotes the
space of all endomorphisms on $\cL$, which includes the density
operators on $\cL$. We denote by $\pi$ the map on $\End(\cH^{\otimes
  n})$ that permutes the subsystems with permutation
$\pi$~\footnote{The permutation $\pi $ on $n$ elements acts on
  $\cH^n=\cH^{\otimes n}$ by permuting the tensor factors, i.e., $\pi
  \ket{i_1 \cdots i_n}= \ket{i_{\pi^{-1}(1)} \cdots i_{\pi^{-1}(n)}} $
  for a basis $\{\ket{i}\}$ of $\cH$. The space of vectors invariant
  under the action of all $\pi$ is denoted by $\Sym^n(\cH)$. As a map
  on $\End(\cH^{\otimes n})$ we write $\pi(\rho)=\pi \rho \pi^{-1}$.}.
Our main result, the \emph{Post-Selection Theorem}~\footnote{The proof
  essentially relies on the fact that the state that maximizes the
  diamond norm can be \emph{post-selected} by a measurement as
  constructed in Lemma~\ref{lem:extractpart}.}, gives an upper bound
on the norm of a permutation-invariant map in terms of the action of
the map on a purification~\footnote{A \emph{purification}
  $\tau_{\cH^n \cR}$ of $\tau_{\cH^n}$ is a pure state on
  $\cH^{n} \otimes \cR$ satisfying $\tr_{\cR} \tau_{\cH^n\cR}=
  \tau_{\cH^n}$} $\tau_{\cH^n\cR}$ of the state $\tau_{\cH^n}$ defined
by~\eqref{eq:deFinstate}.

\begin{theorem} \label{thm:main} If for any permutation $\pi$ there
  exists a CPTP map $\cK_{\pi}$ such that $\Delta \circ \pi =
  \cK_{\pi} \circ \Delta$,
then
\begin{align*}
  \| \Delta \|_{\dia} \leq g_{n,d} \bigl\| (\Delta \otimes \id)(\tau_{\cH^n\cR}) \|_1 \ .
\end{align*}
$\id$ denotes the identity
map on $\End(\cR)$ and $g_{n,d} = {\textstyle
  \binom{n+d^2-1}{n} }\leq ({n+1})^{d^2-1}$, for $d = \dim \cH$.
\end{theorem}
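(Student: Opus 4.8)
The plan is to collapse the supremum over inputs defining $\|\Delta\|_\dia$ to a single evaluation on a purification of $\tau_{\cH^n}$, by arranging that the input attaining the norm can be \emph{post-selected} from $\tau_{\cH^n\cR}$ by an operation on the reference alone. By Kitaev's characterization quoted above and convexity of the trace norm, $\|\Delta\|_\dia=\|(\Delta\otimes\id)(\proj\psi)\|_1$ for some pure $\ket\psi_{\cH^n\cR_0}$. First I would show the optimal $\psi$ may be taken with permutation-invariant marginal on $\cH^{\otimes n}$: applying the hypothesis to $\pi$ and to $\pi^{-1}$, and using that CPTP maps contract the trace norm, gives $\|(\Delta\otimes\id)\bigl((\pi\otimes\id)\proj\psi(\pi^{-1}\otimes\id)\bigr)\|_1=\|(\Delta\otimes\id)(\proj\psi)\|_1$ for all $\pi$; adjoining an ancilla $\cR_1$ with orthonormal states $\ket\pi$ and setting $\ket{\psi'}:=\tfrac1{\sqrt{n!}}\sum_\pi(\pi\otimes\id_{\cR_0})\ket\psi\otimes\ket\pi_{\cR_1}$, the marginal of $\proj{\psi'}$ on $\cH^{\otimes n}$ is $\tfrac1{n!}\sum_\pi\pi\psi_{\cH^n}\pi^{-1}$, hence permutation-invariant, while pinching $\cR_1$ in the basis $\ket\pi$ makes $(\Delta\otimes\id)(\proj{\psi'})$ block-diagonal with total trace norm $\sum_\pi\tfrac1{n!}\|(\Delta\otimes\id)((\pi\otimes\id)\proj\psi(\pi^{-1}\otimes\id))\|_1=\|\Delta\|_\dia$; since pinching cannot increase the trace norm and $\ket{\psi'}$ is a legitimate input, $\|(\Delta\otimes\id)(\proj{\psi'})\|_1=\|\Delta\|_\dia$. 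So, renaming, the optimal $\ket\psi_{\cH^n\cR}$ has permutation-invariant marginal $\psi_{\cH^n}$.

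Next I would re-purify it inside a symmetric subspace. A permutation-invariant $\psi_{\cH^n}$ has the purification $\ket\Psi=(\sqrt{\psi_{\cH^n}}\otimes\id)\ket\Omega$ onto $(\cH\otimes\cH')^{\otimes n}$ with $\dim\cH'=d$ and $\ket\Omega$ the (unnormalized) maximally entangled vector; since $\sqrt{\psi_{\cH^n}}$ commutes with every permutation operator on $\cH^{\otimes n}$ and $\ket\Omega$ is fixed by the diagonal action $\pi\otimes\pi$, the vector $\ket\Psi$ lies in $\Sym^n(\cH\otimes\cH')$, whose dimension is $g_{n,d}=\binom{n+d^2-1}{n}$. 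Purifications differing only by a reference isometry, $\|(\Delta\otimes\id)(\proj\Psi)\|_1=\|\Delta\|_\dia$. Now Lemma~\ref{lem:extractpart} enters: the Hilbert--Schmidt measure on density operators of $\cH$ is exactly the one induced by the partial trace over a $d$-dimensional ancilla of the uniform measure on unit vectors of $\cH\otimes\cH'$, so $\tau_{\cH^n}=\int\tr_{(\cH')^{\otimes n}}(\proj\varphi)^{\otimes n}\,d\varphi=g_{n,d}^{-1}\tr_{(\cH')^{\otimes n}}P_{\Sym^n(\cH\otimes\cH')}$, using the standard identity $\int(\proj\varphi)^{\otimes n}\,d\varphi=g_{n,d}^{-1}P_{\Sym^n(\cH\otimes\cH')}$. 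Hence a purification of $\tau_{\cH^n}$ is $\ket\tau=g_{n,d}^{-1/2}\sum_{i=1}^{g_{n,d}}\ket{e_i}\otimes\ket i_{\cR_0}$, for an orthonormal basis $\ket{e_i}$ of $\Sym^n(\cH\otimes\cH')\subseteq(\cH\otimes\cH')^{\otimes n}$ and reference $\cR=(\cH')^{\otimes n}\otimes\cR_0$; every other purification differs by an isometry on $\cR$.

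To finish, write $\ket\Psi=\sum_i c_i\ket{e_i}$ and $\ket{\bar c}:=\sum_i\overline{c_i}\ket i\in\cR_0$. The single-Kraus map $\Lambda_\Psi(\cdot)=\bra{\bar c}(\cdot)\ket{\bar c}$ (the ``success'' branch of a two-outcome measurement, hence completely positive and trace-non-increasing) applied to $\cR_0$ sends $\ket\tau$ to $g_{n,d}^{-1/2}\ket\Psi$, so $\tilde\Lambda:=\id_{(\cH')^{\otimes n}}\otimes\Lambda_\Psi$ satisfies $(\id_{\cH^n}\otimes\tilde\Lambda)(\tau_{\cH^n\cR})=g_{n,d}^{-1}\proj\Psi$. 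Since $\Delta$ acts on $\cH^{\otimes n}$ and $\tilde\Lambda$ on $\cR$, they commute, giving $(\Delta\otimes\id)(\proj\Psi)=g_{n,d}\,(\id_{\cH'}\otimes\tilde\Lambda)\bigl((\Delta\otimes\id)(\tau_{\cH^n\cR})\bigr)$; and $\id_{\cH'}\otimes\tilde\Lambda$, being completely positive and trace-non-increasing, does not increase the trace norm (complete it to a CPTP map and note a sub-block of a block-diagonal operator has at most the trace norm of the whole). Therefore $\|\Delta\|_\dia=\|(\Delta\otimes\id)(\proj\Psi)\|_1\le g_{n,d}\|(\Delta\otimes\id)(\tau_{\cH^n\cR})\|_1$, and $g_{n,d}=\prod_{j=1}^{d^2-1}\tfrac{n+j}{j}\le(n+1)^{d^2-1}$.

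The hard part, I expect, is the first paragraph together with the measure identification opening Lemma~\ref{lem:extractpart}: recognizing that $\|\Delta\|_\dia$ is attained on an input whose $\cH^{\otimes n}$-marginal is permutation-invariant and, after re-purification, sits in $\Sym^n(\cH\otimes\cH')$, and that this same subspace is the support of a purification of the Hilbert--Schmidt de Finetti state --- so that the constant $g_{n,d}$ is nothing but the dimension of $\Sym^n(\complex^{d^2})$. Once that picture is in place, the rest is routine manipulation of the isometric freedom of purifications and of contractivity under completely positive trace-non-increasing maps.
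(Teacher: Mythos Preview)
Your proof is correct and follows essentially the same strategy as the paper: symmetrize the input by tagging permutations with an orthogonal register, re-purify the resulting permutation-invariant $\cH^{\otimes n}$-marginal inside $\Sym^n(\cH\otimes\cK)$, and then post-select this state from a purification of $\tau_{\cH^n}$ via a trace-non-increasing map on the reference (the content of the paper's Lemma~\ref{lem:extractpart}). The only differences are cosmetic: the paper uses the incoherent mixture $\frac{1}{n!}\sum_\pi(\pi\otimes\id)(\rho)\otimes\proj{\pi}$ directly rather than your coherent superposition followed by pinching, and it cites~\cite{Renner05,ChrKoeMitRen07} for the existence of the symmetric-subspace purification instead of building it explicitly as $(\sqrt{\psi_{\cH^n}}\otimes\id)\ket\Omega$.
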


The proof of Theorem~\ref{thm:main} uses the following lemma which
relates arbitrary density operators $\rho_{\cH^n\cK^n}$ on the
symmetric subspace $\Sym^n(\cH \otimes \cK)\subset (\cH\otimes\cK)^{\otimes n}$, for $\cK \cong \cH$, to
a particular purification of $\tau_{\cH^n}$. We define the state
$\tau_{\cH^n\cK^n} = \int \sigma_{\cH\cK}^{\otimes n}d(\sigma_{\cH
  \cK}) $ on $\Sym^n(\cH \otimes \cK)$, where $d(\cdot)$ is the
measure on the pure states induced by the Haar measure
on the unitary group acting on $\cH\otimes\cK$. We note that
$\tau_{\cH^n\cK^n}$ extends the state $\tau_{\cH^n}$ defined
in~\eqref{eq:deFinstate}, i.e., $\tr_{\cK^n}
\tau_{\cH^n\cK^n}= \tau_{\cH^n}$; the measure $\mu(\cdot )$
furthermore is the one induced by the Hilbert-Schmidt metric on
$\End(\cH)$~\cite{ZycSom01}. Let now $\tau_{\cH^n\cK^n\cN} $ be a
purification of $\tau_{\cH^n\cK^n}$.

\begin{lemma} \label{lem:extractpart} For $\rho_{\cH^n\cK^n}$ a
 density operator supported on $\Sym^n({\cH \otimes \cK})$, with $\cK \cong \cH$, there exists a
 trace-non-increasing map $\cT$ from the purifying system $\End(\cN)$
 to $\mathbb{C}$ such that
\begin{align} \label{eq:substate}
  \rho_{\cH^n\cK^n} = g_{n, d} \, (\id \otimes \cT)(\tau_{\cH^n\cK^n\cN}) \ ,
\end{align}
where $\id$ is the identity map on $\End((\cH \otimes
\cK)^{\otimes n})$ and $d=\dim\cH$.
\end{lemma}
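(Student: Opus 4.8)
The goal is to express an arbitrary density operator $\rho_{\cH^n\cK^n}$ on $\Sym^n(\cH\otimes\cK)$ as a (scaled) "sub-state" of the de Finetti purification $\tau_{\cH^n\cK^n\cN}$, i.e. obtained by applying a trace-non-increasing map to the purifying system $\cN$ alone. The key structural fact I would exploit is that $\tau_{\cH^n\cK^n} = \int (\ketbra{\sigma_{\cH\cK}})^{\otimes n}\, d(\sigma_{\cH\cK})$ is, up to the normalization $g_{n,d}^{-1} = \binom{n+d^2-1}{n}^{-1}$, equal to the \emph{normalized projector} onto $\Sym^n(\cH\otimes\cK)$. This is the standard fact that the Haar-averaged tensor power of a pure state on a $D$-dimensional space equals $\binom{n+D-1}{n}^{-1}$ times the projector onto the symmetric subspace; here $D = \dim(\cH\otimes\cK) = d^2$, so $\binom{n+d^2-1}{n} = g_{n,d}$. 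Thus $\tau_{\cH^n\cK^n} = g_{n,d}^{-1}\, P_{\Sym}$.

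First I would choose a canonical purification of $P_{\Sym}$: writing $r=\dim\Sym^n(\cH\otimes\cK)$ and letting $\{\ket{e_j}\}_{j=1}^r$ be an orthonormal basis of $\Sym^n(\cH\otimes\cK)$, take $\cN$ with an orthonormal set $\{\ket{f_j}\}_{j=1}^r$ and set $\tau_{\cH^n\cK^n\cN} = g_{n,d}^{-1}\sum_{j,k}\ket{e_j}\bra{e_k}\otimes\ket{f_j}\bra{f_k}$, i.e. the (scaled) maximally entangled state between $\Sym^n(\cH\otimes\cK)$ and its copy inside $\cN$. Since any two purifications of $\tau_{\cH^n\cK^n}$ differ only by an isometry on the purifying system, it suffices to construct $\cT$ for this particular purification and then compose with that isometry; I would remark this explicitly. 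Next, because $\rho_{\cH^n\cK^n}$ is supported on $\Sym^n(\cH\otimes\cK)$, write it in the $\{\ket{e_j}\}$ basis as $\rho_{\cH^n\cK^n} = \sum_{j,k}\rho_{jk}\ket{e_j}\bra{e_k}$. The matrix $(\rho_{jk})$ is positive semidefinite with trace $1$.

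The heart of the argument is then to define $\cT:\End(\cN)\to\bbC$ by $\cT(X) = g_{n,d}\sum_{j,k}\rho_{kj}\bra{f_j}X\ket{f_k}$ — that is, $\cT(\cdot) = g_{n,d}\,\tr(R\,\cdot)$ where $R = \sum_{j,k}\rho_{kj}\ket{f_k}\bra{f_j}$ is the transpose of $\rho_{\cH^n\cK^n}$ transported to $\cN$ via $\ket{e_j}\mapsto\ket{f_j}$. A direct computation shows $(\id\otimes\cT)(\tau_{\cH^n\cK^n\cN}) = g_{n,d}^{-1}\cdot g_{n,d}\sum_{j,k,l,m}\rho_{ml}\ket{e_j}\bra{e_k}\,\bra{f_l}\,f_j\rangle\langle f_k\,\ket{f_m} = \sum_{j,k}\rho_{kj}\ket{e_j}\bra{e_k}$, which equals $\rho_{\cH^n\cK^n}$ after noting the index bookkeeping gives back $(\rho_{jk})$ (the two transposes cancel). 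It remains to check that $\cT$ is trace-non-increasing, i.e. completely positive and $\tr\circ\cT \le \tr$ as maps into $\bbC$; complete positivity holds because $R\ge 0$ (it is unitarily equivalent to $\rho_{\cH^n\cK^n}\ge 0$), and $\cT$ applied to the maximally mixed-type argument gives $g_{n,d}\tr R = g_{n,d}$, so one must verify $\cT(\mathbb{1}_\cN)\le 1$ — here I would be careful about whether $\cN$ is taken exactly of dimension $r$ or larger, choosing it exactly $r$-dimensional so that $\cT(\mathbb{1}_\cN) = g_{n,d}\tr R$...

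Actually the cleanest route, which I would adopt instead, is: the map $X\mapsto g_{n,d}\,(\id\otimes\cT_0)(\tau_{\cH^n\cK^n\cN})$ should reproduce $P_{\Sym}$ when $\cT_0$ is the appropriate trace functional, and linearity plus the operator inequality $\rho_{\cH^n\cK^n}\le \mathbb{1}_{\Sym}$ (since $\rho$ is a density operator on a space where the identity is $P_{\Sym}$, and in fact $\rho \le P_{\Sym}$ fails in general — rather $0\le\rho\le\mathbb{1}$) forces the corresponding $R$ to satisfy $0\le R\le \mathbb{1}_\cN$, hence $\cT = \tr(R\,\cdot)$ composed with the rescaling is trace-non-increasing after absorbing $g_{n,d}$ correctly. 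The main obstacle, and the point I would spend the most care on, is exactly this normalization/complete-positivity bookkeeping: ensuring the single constant $g_{n,d}$ simultaneously makes $(\id\otimes\cT)(\tau_{\cH^n\cK^n\cN})$ equal to $\rho_{\cH^n\cK^n}$ on the nose \emph{and} keeps $\cT$ trace-non-increasing; this hinges on the identity $\tau_{\cH^n\cK^n} = g_{n,d}^{-1}P_{\Sym}$ and on $\rho_{\cH^n\cK^n}$ being a genuine density operator (trace $1$, positive) supported inside $\Sym^n(\cH\otimes\cK)$.
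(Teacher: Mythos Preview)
Your approach is exactly the paper's: identify $\tau_{\cH^n\cK^n}$ with $g_{n,d}^{-1}P_{\Sym}$, take the canonical maximally-entangled purification on $\cN\cong\Sym^n(\cH\otimes\cK)$, and steer to $\rho$ by a measurement map $\cT(X)=\tr(RX)$ on $\cN$. The confusion you flag at the end is self-inflicted: you inserted an extra factor of $g_{n,d}$ into the definition of $\cT$. The lemma already carries the $g_{n,d}$ \emph{outside} $\cT$, so you want $(\id\otimes\cT)(\tau_{\cH^n\cK^n\cN})=g_{n,d}^{-1}\rho$, not $\rho$. With the correct choice $\cT(X)=\tr(RX)$ (no prefactor) and $R$ the transport of $\rho^T$ to $\cN$, the computation gives exactly $g_{n,d}^{-1}\rho$, and trace-non-increasingness is immediate from $0\le R\le\openone_{\cN}$ since $R$ is a density operator. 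Your worry that $\cT(\openone_{\cN})=g_{n,d}$ disappears once the spurious factor is removed: one gets $\cT(\openone_{\cN})=\tr R=1$.

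Regarding the transpose: your claim that ``the two transposes cancel'' is not quite right as written---$\sum_{j,k}\rho_{kj}\ket{e_j}\bra{e_k}$ is $\rho^T$, not $\rho$. You fix this by taking $R$ to be the transpose of $\rho$ (transported via $e_j\mapsto f_j$), which is still positive with unit trace. The paper sidesteps this bookkeeping entirely by choosing $\{\ket{e_j}\}$ to be the \emph{eigenbasis} of $\rho$, so that $\rho$ is diagonal and $\rho=\rho^T$; then $R=\rho_{\cN}$ is simply the copy of $\rho$ on $\cN$ and $\cT(\sigma)=\tr(\sigma\rho_{\cN})$.
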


\begin{proof}
  Let $\cN \cong \Sym^n(\cH \otimes \cK)$ and let $\{\ket{\nu_i}\}_i$
  be an eigenbasis of $\rho_{\cH^n\cK^n}$. Since, by Schur's lemma,
  ${\tau}_{\cH^n\cK^n}$ is the state proportional to the identity on
  $\Sym^n(\cH \otimes \cK)$, ${\tau}_{\cH^n\cK^n\cN}:=
  \proj{\Psi}_{\cH^n\cK^n\cN}$ is a purification of
  $\tau_{\cH^n\cK^n}$, where $\ket{\Psi}_{\cH^n\cK^n\cN} :=
  g_{n,d}^{-\frac{1}{2}} \sum_i \ket{\nu_i} \otimes \ket{\nu_i}$, and
  $g_{n,d}$ is the dimension of $\Sym^n(\cH \otimes
  \cK)$. Furthermore, for any basis vector $\ket{\nu_i}$,
\begin{align*}
  \proj{\nu_i}_{\cH^n\cK^n}
=
  g_{n, d} \, \tr_{\cN}(\tau_{\cH^n\cK^n\cN} \cdot \openone_{\cH^n\cK^n} \otimes \proj{\nu_i}_{\cN}) \ ,
\end{align*}
where  $\openone_{\cH^n\cK^n}\in\End((\cH\otimes\cK)^{\otimes n})$ is the identity.
This implies~\eqref{eq:substate} with $\cT: \sigma \mapsto \tr(\sigma
\rho_{\cN})$, since $\{\ket{\nu_i}\}_i$ is an eigenbasis of
$\rho_{\cH^n\cK^n}$. Because $\cT$ is clearly
trace-non-increasing, this concludes the proof.
\end{proof}

\begin{proof}[Proof of Theorem~\ref{thm:main}]
 We need to show that for any finite-dimensional space $\cR'$ and any density
 operator $\rho_{\cH^n\cR'}$,
\begin{align} \label{eq:fdist}
  \bigl\| (\Delta \otimes \id)(\rho_{\cH^n\cR'}) \bigr\|_1
\leq
  g_{n,d} \bigl\| (\Delta \otimes \id)(\tau_{\cH^n\cR}) \bigr\|_1 \ ,
\end{align}
for some purification $\tau_{\cH^n\cR}$ of $\tau_{\cH^n}$.
In a first step, we show that it is sufficient to
prove~\eqref{eq:fdist} for density operators $\rho_{\cH^n\cR'}$ with support on $\Sym^n({\cH \otimes
  \cK})$, where $\cK \cong \cH$ and $\cR' = \cK^{\otimes n}$. To see
this, let $\rho_{\cH^n\cR'}$ be an arbitrary density operator and define the density operator
\begin{align*}
  \bar{\rho}_{\cH^n\cR'\cR''}
=
  \frac{1}{n!} \sum_{\pi} (\pi \otimes \id)(\rho_{\cH^n\cR'}) \otimes \proj{\pi}_{\cR''} \ ,
\end{align*}
where the sum ranges over all permutations $\pi$ of the $n$
subsystems and where $\{\ket{\pi}\}_{\pi}$ is an orthonormal
family of vectors on an auxiliary space $\cR''$. Then, by
construction, the reduced state $\bar{\rho}_{\cH^n} =\tr_{\cR' \cR''}(\bar{\rho}_{\cH^n\cR' \cR''})$ is permutation invariant. Hence, according to~\cite{Renner05,ChrKoeMitRen07}, there exists a purification $\bar{\rho}_{\cH^n\cK^n}$ of $\bar\rho_{\cH^n}$
supported on $\Sym^n({\cH \otimes \cK})$. In
particular, because all purifications are equivalent up to
isometries, there exists a CPTP map $\cG$ from $\End(\cK^{\otimes n})$
to $\End(\cR' \otimes \cR'')$ such that $\bar{\rho}_{\cH^n\cR'\cR''} = (\id \otimes
\cG)(\bar{\rho}_{\cH^n\cK^n})$.  Making use of the assumption on the permutation
invariance of $\Delta$, we thus find that $\bigl\| (\Delta \otimes  \id)(\rho_{\cH^n\cR'}) \bigr\|_1$ equals 
\begin{align*}
  \frac{1}{n!}& \sum_{\pi} \bigl\| \bigl((\Delta \circ \pi) \otimes \id \bigr)(\rho_{\cH^n\cR'}) \bigr\|_1  =
  \bigl\| (\Delta \otimes \id)(\bar{\rho}_{\cH^n\cR'\cR''}) \bigr\|_1 \\
 &=
  \bigl\| (\Delta \otimes \cG)(\bar{\rho}_{\cH^n\cK^n}) \bigr\|_1  \leq
  \bigl\| (\Delta \otimes \id)(\bar{\rho}_{\cH^n\cK^n}) \bigr\|_1 \ ,
\end{align*}
where the last inequality holds because a CPTP map cannot increase the
norm.  It thus remains to show that~\eqref{eq:fdist} holds for states
$ \rho_{\cH^n\cK^n}$ in $\Sym^n({\cH \otimes \cK})$. By
Lemma~\ref{lem:extractpart} there exists a map $\cT$ such that
$\rho_{\cH^n\cK^n} = g_{n,d} \, ({\id \otimes
  \cT})(\tau_{\cH^n\cK^n\cN})$. Then, by linearity, we have
\begin{align*}
  \bigl\| (\Delta \otimes \id)(\rho_{\cH^n\cK^n}) \bigr\|_1
=
  g_{n,d} \bigl\| ( \Delta \otimes \cT)(\tau_{\cH^n\cK^n\cN}) \bigr\|_1 \ .
\end{align*}
Inequality~\eqref{eq:fdist} then follows from the fact that $\cT$
cannot increase the norm and by setting $\cR=\cK^{\otimes n} \otimes
\cN$.
\end{proof}

\emph{Application to Quantum Key Distribution.} QKD is the art of
generating a \emph{secret key} known only to two distant parties,
\emph{Alice} and \emph{Bob}, connected by an insecure quantum
communication channel and an authentic classical
channel~\footnote{\emph{Authenticity} means that the communication
  cannot be altered by an adversary. If only completely insecure
  channels are available, authenticity may be simulated using a short
  initial key shared between Alice and Bob.}.  Most QKD protocols can
be subdivided into two parts. In the first, Alice and Bob use the
quantum channel to distribute $n$ entangled particle pairs (this phase
may include advanced quantum protocols such as quantum repeaters). In
the second part, they apply local measurements (we will restrict
ourselves to the typical case of measurements that are independent and
identical on each of the $n$ pairs) followed by a sequence of
classical post-processing steps (such as \emph{parameter estimation},
\emph{error correction}, and \emph{privacy amplification}) to extract
$\ell$ key bits~\footnote{This describes an \emph{entanglement-based}
  protocol. However, our results immediately extend to
  \emph{prepare-and-measure schemes}, because their security analysis
  can generally be reduced to corresponding entanglement-based
  schemes~\cite{BeBrMe92}.}. It induces a map $\cE$ from $(\cH_A
\otimes \cH_B)^{\otimes n}$ (the $n$ particle pairs) to the set of
pairs $(S_A, S_B)$ of $\ell$-bit strings (Alice and Bob's final keys,
respectively) and $C$, where $C$ is a transcript of the classical
communication. Note that $\ell$ may depend on the input; in
particular, $\ell = 0$ if the entanglement of the initial particle
pairs is too small for key extraction.

A QKD protocol is said to be \emph{$\eps$-secure} (for some small
$\eps \geq 0$) if, for any attack of an adversary, the final keys
$S_A$ and $S_B$ computed by Alice and Bob are identical, uniformly
distributed, and independent of the adversary's knowledge, except with
probability $\eps$. This criterion can be reformulated as a condition
on the map $\cE$. Since an adversary may have full control over the
quantum channel connecting Alice and Bob, we require that, for
\emph{any} input to $\cE$, the output is a pair $(S_A, S_B)$ of secure
keys of length $\ell \geq 0$~\footnote{Of course, any non-trivial
  protocol generates keys of positive length $\ell > 0$ for at least
  \emph{some} inputs.}. To make this more precise, let $\cS$ be the
map that acts on the output $(S_A, S_B, C)$ of $\cE$ by replacing
$(S_A, S_B)$ by a pair $(S'_A, S'_B)$ of identical and uniformly
distributed keys of the same length, while leaving $C$ unchanged. With
this definition, the concatenated map $ \cF :=\cS \circ \cE$ describes
an \emph{ideal} key distillation scheme which always outputs a perfect
key pair. We then say that $\cE$ is \emph{$\eps$-secure} if $\| \cE -
\cF \|_{\dia} \leq \eps$.

$\cE$ is typically
invariant under permutations of the inputs. However, if it is not,
permutation invariance can be enforced by prepending an additional
\emph{symmetrization step} where both Alice and Bob permute their
inputs according to a permutation $\bar{\pi}$ chosen at random by one
party and communicated to the other using the classical channel~\footnote{It is
easy to see that this symmetrized key distillation protocol $\cE$
satisfies $\cK_{\pi} \circ \cE \circ \pi = \cE$ for any permutation
$\pi$, where $\cK_{\pi}$ is the operation that acts on the output
$(S_A, S_B, C)$ by replacing the communicated permutation $\bar{\pi}$
(in $C$) by $\bar{\pi} \circ \pi$. Similarly, we have $\cK_{\pi} \circ
(\cS \circ \cE) \circ \pi = \cS \circ \cE$, because $\cK_{\pi}$ acts
like the identity on the key pair $(S_A, S_B)$.}. We can thus apply
Theorem~\ref{thm:main} with $\Delta := \cE-\cF$, which
implies that $\cE$ is $\eps$-secure whenever
\begin{align} \label{eq:redcrit}
 \bigl\|\bigl((\cE - \cF) \otimes \id \bigr)(\tau_{\cH^n \cR})\bigr\|_1
\leq
 \bar{\eps} := \eps (n+1)^{-(d^2-1)} \ ,
\end{align}
where $\cH := {\cH_A \otimes \cH_B}$, where $d = \dim(\cH)$, and where
$\tau_{\cH^n \cR}$ is a purification of the state $\tau_{\cH^n}$
defined by~\eqref{eq:deFinstate}.

We will now employ~\eqref{eq:redcrit} to show that for proving
security of a QKD protocol it suffices to consider \emph{collective
  attacks}, where the adversary acts on each of the signals
independently and identically. Using the above formalism, we say that
$\cE$ is \emph{$\bar\eps$-secure against collective attacks} if $ \|
\bigl((\cE - \cF) \otimes \id\bigr)(\sigma_{\cH\cK}^{\otimes n}) \|_1
\leq \bar{\eps} $, for any (pure) $\sigma_{\cH\cK}$ on $\cH \otimes
\cK$, where $\cK \cong \cH$. This immediately implies that the same
bound holds for the extension $\tau_{\cH^n\cK^n} = \int
\sigma_{\cH\cK}^{\otimes n}d(\sigma_{\cH \cK})$ of~$\tau_{\cH^n}$,
\begin{multline} \label{eq:mixedcrit}
\|(\cE - \cF) \otimes \id_{\cK^n}  (\tau_{\cH^n\cK^n}) \|_{1} \\
 \leq \max_{\sigma_{\cH\cK}} \| (\cE - \cF) \otimes \id_{\cK^n}  (\sigma_{\cH\cK}^{\otimes n}) \|_{1} \leq \bar\varepsilon \ . 
\end{multline}

To obtain criterion~\eqref{eq:redcrit}, we need to show that a similar
bound still holds if we consider a purification $\tau_{\cH^n \cK^n
  \cN}$ of $\tau_{\cH^n\cK^n}$. For this, we think of $\cN$ as an
additional system that is available to an adversary.  Because $\cN$
can be chosen isomorphic to $\Sym^n(\cH \otimes \cK)$, its dimension
is bounded by $(n+1)^{d^2-1}$. The idea is then to compensate the
extra information available to the adversary by slightly reducing the
size of the final key. More precisely, according to the privacy
amplification theorem (Theorem~5.5.1 of~\cite{Renner05}), the protocol
$\cEp$ obtained from $\cE$ by shortening the output of the hashing by
$2 \log_2 \dim \cN \leq 2(d^2-1) \log_2 (n+1)$ bits satisfies
 \begin{align*}
   \| (\cEp - \cFp) \otimes \id_{\cK^n\cN} (\kappa) \|_{1}
 \leq
   \| (\cE - \cF) \otimes \id_{\cK^n} \otimes \tr_{\cN} (\kappa) \|_{1} \ .
 \end{align*}
 Setting $\kappa$ equal to $\tau_{\cH^n\cK^n\cN}$ and
 using~\eqref{eq:mixedcrit}, we conclude that $\| (\cE' - \cF')
 \otimes \id_{\cK^n \cN} (\tau_{\cH^n\cK^n\cN}) \|_{1} \leq \bar\eps$,
 which corresponds to~\eqref{eq:redcrit}. We have thus shown that
 $\bar{\eps}$-security of $\cE$ against collective attacks implies
 $\eps$-security of $\cE'$ against general attacks.

In the security analysis against collective attacks, the security
parameter $\bar{\eps}$ can be chosen exponentially small, i.e.,
$\bar\eps\leq 2^{-c \delta^2 n}$ (for some $c>0$), at the only cost of
reducing the key size by an (arbitrarily small) fraction $\delta$
compared to the asymptotically optimal rate. The crucial observation
made in this paper is that the security parameter $\eps$ for general
attacks and $\bar\eps$ are polynomially related
(see~\eqref{eq:redcrit}). We thus find $\eps\leq 2^{-c \delta^2
  n+(d^2-1) \log_2 (n+1)}$, which shows that security under the
assumption of collective attacks implies full security essentially
without changing the security parameter. This security estimate
improves on previous estimates based on the exponential de Finetti
theorem and has a direct impact on the security analysis of current
experimental implementations.

\emph{Concluding remark.} The technical results in this paper deal
with quantum states and channels that commute with the action of the
symmetric group on $\cH^{\otimes n}$, but can be easily generalized to
the action of an arbitrary finite or locally compact group $G$ on a
space $\cV$~\footnote{The role of $\cH^{\otimes n}$ is taken by the
  space $\cV$ of a finite-dimensional unitary representation $V$ of
  $G$.  We denote by $g\ket{v}\equiv V(g)\ket{v}$ the action of $g \in
  G$ on $\ket{v} \in \cV$ and by $g(\rho)=V(g) \rho V(g^{-1})$ the
  action on $\End(\cV)$. The space $\cK^{\otimes n}$ is replaced by a
  space $\cW\cong \cV$ on which $G$ acts with the dual representation,
  $g\ket{w}=V(g^{-1})^T \ket{w}$ for $\ket{w} \in \cW$. The role of
  the symmetric subspace $\Sym^n(\cH\otimes \cK)$ is then taken by
  $(\cV\otimes \cW)^G=\{\ket{x} \in \cV\otimes \cW: g\times g
  \ket{x}=\ket{x} \forall g \in G\}$, the \emph{invariant space} of
  $\cV\otimes \cW$. The constant $g_{n, d}$ becomes $\dim (\cV\otimes
  \cW)^G$ and the state $\tau_{\cV\cW}$ is the state proportional to
  the identity on $(\cV\otimes \cW)^G \subset \cV \otimes
  \cW$. }. Seen in the light of more general symmetry groups $G$, we
thus hope that our results will find fundamental applications in
quantum physics beyond their presented use in quantum information
theory.

\emph{Acknowledgments.} RK acknowledges support by
the NSA under ARO contract no.~W911NF-05-1-0294 and by the NSF under
contract no.~PHY-0456720. RR received support from the EU project
SECOQC.

\end{document}